\numberwithin{equation}{section}
\newcommand{\rvline}{\hspace*{-\arraycolsep}\vline\hspace*{-\arraycolsep}}
\newtheorem{theorem}{Theorem}[section]
\newtheorem{lemma}[theorem]{Lemma}
\newtheorem{conjecture}[theorem]{Conjecture}
\begin{document}
\title{A note on graph drawings with star-shaped boundaries in the plane}

\author{Yanwen Luo}
\address{Department of Mathematics, University of California, Davis
, California 95616}
\email{ywluo@ucdavis.edu}
\thanks{The author was supported in part by NSF Grant DMS-1719582.}

\keywords{geodesic triangulations, Tutte's embedding}

\begin{abstract}
In this short note, we propose a straightforward method to produce an straight-line embedding of a planar graph where one face of a graph is fixed in the plane as a star-shaped polygon. It is based on minimizing discrete Dirichlet energies, following the idea of Tutte's embedding theorem. We will call it a geodesic triangulation of the star-shaped polygon. Moreover, we study the homotopy property of spaces of all straight-line embeddings. We give a simple argument to show that this space is contractible if the boundary is a non-convex quadrilateral. We conjecture that the same statement holds for general star-shaped polygons. 

\end{abstract}

\maketitle

\section{Introduction}

A triangulation of a fixed combinatorial type of $T$ on a surface with a Riemannian metric $(S, g)$ is a \textit{geodesic triangulation} if each edge in $T$ is embedded as a geodesic arc in $S$. We study the following two problems concerning the space of geodesic triangulations with a fixed combinatorial type on a polygon in the Euclidean plane. 
\begin{enumerate}
	\item The \textit{embeddability} problem: Given a surface $(S, g)$ with a triangulation $T$,  can we construct a geodesic triangulation with the combinatorial type of $T$? In particular, if $S$ is a 2-disk with a triangulation $T$ and we specify the positions of the boundary vertices of $T$ in the plane so that they form a polygon, can we find positions of the interior vertices in the plane to construct a geodesic triangulation of $S$ with the combinatorial type of $T$?
	\item The \textit{contractibility} problem: If the space of geodesic triangulations on $(S, g)$ with a fixed combinatorial type of $T$ is not empty, what is the topology of this space? In particular, is it a contractible space?
\end{enumerate}

Tutte's embedding theorem is one of the most fundamental results in graph theory. It provides a simple answer to the embeddability problem by fixing a face of the $1$-skeleton of a triangulation $T$ of a disk, considered as a planar graph, to be a convex polygon and solving a linear system to find the locations of other vertices. This idea has been generalized to construct embeddings of graphs on general surfaces \cite{de1991comment, hass2012simplicial}.

Tutte's construction fails if the boundary is not convex. Various methods have been proposed to construct embeddings of graphs given a general boundary condition by Hong-Nagamochi \cite{hong2008convex} and Xu et al.\cite{xu2011embedding}. In this note, we give a constructive method to produce geodesic triangulations with a fixed combinatorial type for a star-shaped polygon.

The contractibility problem has been studied in \cite{bing1978linear, cairns1944deformations, cairns1944isotopic, ho1973certain, 1910.00240}, originally motivated to solve problems of determining the existence and uniqueness of differentiable structures on triangulated manifolds \cite{connelly1983problems}. From the viewpoint of a novel field of discrete differential geometry, the spaces of geodesic triangulations of a fixed homotopy type can be regarded as discrete versions of the group of diffeomorphisms of the 2-disk fixing the boundary pointwise. A classical theorem by Smale \cite{smale1959diffeomorphisms} states that this group is contractible.  Bloch-Connelly-Henderson proved that the space of geodesic triangulations of a fixed combinatorial type $T$ of a convex polygon $\Omega$, denoted as $\mathcal{GT}(\Omega, T)$, is homeormophic to some Euclidean space $\mathbb{R}^k$. This result discretizes Smale's theorem. 

The results about homotopy types of spaces of geodesic triangulations of general polygons are sporadic. Using an induction argument, Bing-Starbird \cite{bing1978linear} showed that if $\Omega$ is a star-shaped polygon with a triangulation $T$ in the plane, and $T$ does not contain any edge connecting two boundary vertices, then $\mathcal{GT}(\Omega, T)$ is non-empty and path-connected.
They constructed an example showing that $\mathcal{GT}(\Omega, T)$ was not necessarily path-connected if we didn't assume star-shaped boundary. More complicated examples given in  \cite{luo2022spaces} show that the spaces of geodesic triangulations of general polygons can be complicated, in the sense that their homotopy groups can have large ranks. We answer the contractibility problem in a special case where the boundary polygon is a non-convex quadrilateral. 

In Section 2, we recall Tutte's theorem and its  generalizations. In Section 3, we give an explicit construction of a geodesic triangulation in $\mathcal{GT}(\Omega, T)$ if $\Omega$ is a strictly star-shaped polygon. In Section 4, we show that the space of geodesic triangulations of a star-shaped quadrilateral is contractible.

\section{Tutte's embedding and its generalization}
\subsection{Tutte's embedding for the disk}

Given a triangulation $T = (V, E, F)$ of the 2-disk with the sets of vertices $V$, edges $E$ and faces $F$,  the $1$-skeleton of $T$ is a planar graph. There is no canonical method to embed this graph in the plane. Tutte\cite{tutte1963draw} provided an efficient method to construct a straight-line embedding of a 3-vertex-connected planar graph by specifiying the coordinates of vertices of one face as a convex polygon and solving for the coordinates of other vertices with a linear system of equations. Using a discrete maximal principle, Floater\cite{floater2003one} proved the same result for triangulations of the 2-disk. Gortler, Gotsman, and Thurston\cite{gortler2006discrete} reproved Tutte's theorem with \textit{discrete one forms} and generalized this results to the case of multiple-connected polygonal regions with appropriate assumptions on the boundaries. Since we are dealing with triangulations, we use the formulation given by Floater\cite{floater2003one}.

\begin{figure}[h!]
  \includegraphics[width=0.65\linewidth]{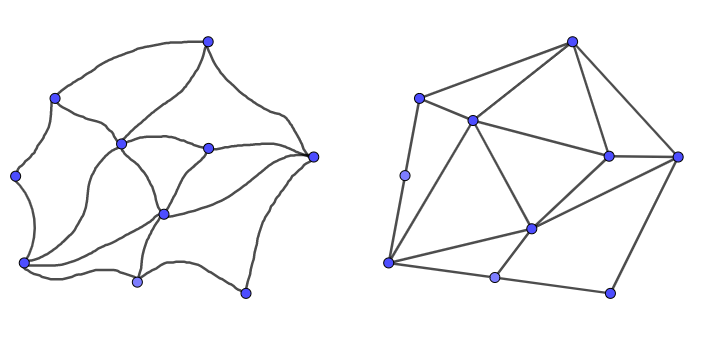}
  \caption{Tutte's embedding}
\end{figure}
\begin{theorem}
	Assume $T = (V, E, F)$ is a triangulation of a convex polygon $\Omega$, and $\phi$ is a simplexwise linear homeomorphism from $T$ to $\mathbb{R}^2$. If  $\phi$ maps every interior vertex in $T$ into the convex hull of the images of its neighbors, and maps the cyclically ordered boundary vertices of $T$ to the cyclically ordered boundary vertices of $\Omega$, then $\phi$ is one to one.
\end{theorem}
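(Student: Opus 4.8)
The plan is to prove injectivity through the discrete maximum principle that the convex-combination hypothesis makes available. Write $|T|$ for the underlying disk, and for a nonconstant affine functional $\ell \colon \mathbb{R}^2 \to \mathbb{R}$ let $u := \ell \circ \phi$ be the simplexwise linear function on $|T|$ with vertex values $u(v) = \ell(\phi(v))$. Since each interior vertex $v$ has $\phi(v)$ in the convex hull of $\{\phi(w) : w \sim v\}$, linearity of $\ell$ gives $\min_{w \sim v} u(w) \le u(v) \le \max_{w \sim v} u(w)$, so no interior vertex is a strict local maximum or minimum of $u$. This is the only place the interior hypothesis enters, and it is the engine of the whole argument.

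Next I would record the behavior of $u$ on the boundary. Because $\phi$ carries the cyclically ordered boundary vertices of $T$ to the cyclically ordered vertices of the convex polygon $\Omega$, the restriction of $u$ to the boundary cycle is unimodal: read cyclically, $u$ has a single ascending arc and a single descending arc. This is exactly the statement that an affine function evaluated at the vertices of a convex polygon, in cyclic order, rises once and falls once.

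Combining the two facts, I would analyze level sets. Fix a direction $\ell$ and a generic value $c$ avoiding all vertex values. On each closed triangle $u$ is affine, so $\{u = c\}$ meets it in a single chord; hence $L_c := \{x \in |T| : u(x) = c\}$ is a properly embedded $1$-manifold, a disjoint union of boundary-to-boundary arcs and closed loops. A loop would bound a subdisk on which $u - c$ has constant sign, forcing an interior vertex to be a strict extremum and contradicting the maximum principle, so there are no loops; and by boundary unimodality $L_c$ meets $\partial |T|$ in exactly two points. A compact $1$-manifold with two boundary points and no closed components is a single arc, so $L_c$ is one embedded arc. The decisive observation is that $\phi(L_c)$ lies on the line $\{\ell = c\}$, with the two endpoints of the arc sent to the two points where this line crosses $\partial \Omega$. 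Thus injectivity of $\phi$ reduces to injectivity of $\phi$ along each level arc: if $\phi(x) = \phi(y) = p$ with $x \ne y$, choosing a suitable direction places both points on a single arc $L_c$ whose image sits in one line, and injectivity of $\phi|_{L_c}$ would force $x = y$.

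Finally, injectivity of $\phi$ along $L_c$ is the same as monotonicity of the line-coordinate of $\phi$ along the arc. Within any triangle this coordinate is affine, hence monotone on the chord, so a failure of monotonicity must occur where $L_c$ crosses an edge shared by two triangles whose images backtrack on the line $\{\ell = c\}$; that is precisely a fold, two triangles meeting along an edge with reversed orientation. I expect the exclusion of such folds to be the main obstacle: the maximum principle and boundary unimodality give the single-arc structure cheaply, but ruling out every orientation reversal requires an extremal argument selecting, among all folded edges, one that would force a forbidden interior extremum of $u$ for an appropriately chosen $\ell$. Once no fold occurs, all triangles are consistently oriented, $\phi|_{L_c}$ is monotone for every direction, and the reduction above yields that $\phi$ is one to one.
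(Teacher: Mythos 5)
Your argument correctly sets up the discrete maximum principle, boundary unimodality, and the structure of generic level sets, but it stops exactly where the theorem actually lives. The step you defer --- excluding folds, i.e.\ showing that two triangles sharing an edge crossed by $L_c$ cannot map to the same side of the line $\{\ell = c\}$ --- is not a technical remainder; it is the whole content of the theorem. Everything before it (single-arc level sets plus the intermediate value theorem) only yields that $\phi$ covers the segment $\{\ell=c\}\cap\Omega$, i.e.\ a surjectivity-type statement; non-injective simplexwise linear maps with folds are precisely what the hypotheses must be shown to forbid. Your sketch of how to do it (``an extremal argument selecting, among all folded edges, one that would force a forbidden interior extremum'') is where every actual proof of this result does its real work: note that the paper itself states this theorem without proof, citing Floater, whose proof establishes local injectivity by a separate lemma showing each interior vertex has neighbors strictly on both sides of every line through its image, while Gortler--Gotsman--Thurston obtain it from an index-counting argument with discrete one-forms. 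As written, your proposal identifies the obstacle but does not overcome it, so it is not a proof.

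Two further points need repair even in the parts you do carry out. First, the no-loop argument requires more than ``no strict local extremum'': a loop in $L_c$ bounds a region on whose interior $u$ attains an extreme value at some vertex, but under the bare convex-hull hypothesis that extremum may be weak, and the propagation argument stalls --- the hull condition only forces \emph{some} neighbor to share the extreme value, not all of them, so the set of extremal vertices need not reach the boundary. You need the strict convex-combination form of the hypothesis (positive weights on all neighbors, which is what Floater actually assumes) to force every neighbor to share the extreme value and derive a contradiction at the boundary. Second, your reduction of injectivity to monotonicity along level arcs silently assumes that the common value $c = \ell(p)$ of the two preimages can be made generic; this fails when $p$ is itself the image of a vertex, since then $c$ is a vertex value for every choice of $\ell$. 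Coincidences involving vertex images therefore need separate treatment, e.g.\ by first proving that $\phi$ is a local homeomorphism --- which is again the fold-exclusion step you have not supplied.
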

This theorem gave a discrete version of the Rado-Kneser-Choquet theorem about harmonic maps from the disk to a convex polygon. Moreover, it gives a constructive method to produce geodesic triangulations of a convex polygon with the combinatorial type of $T$.

First assign a positive weight $c_{ij}$ to a directed edge $(i, j) \in \bar{E}$, where $\bar{E}$ is the set of directed edges of $T$. We normalize the weights by 
$$w_{ij} = \frac{c_{ij}}{\sum_{j\in N(v_i)} c_{ij}}$$
where the set $N(v_i)$ consists of all the vertices that are neighbors of $v_i$, so that $\Sigma_{j\in N(v_i)}w_{ij} = 1$ for all $i = 1, 2,..., N_I$. Notice that we don't impose symmetry condition $w_{ij} = w_{ji}$. We are given the coordinates $\{(b_i^x, b_i^y)\}_{i = N_I + 1}^{|V|}$ for all the boundary vertices such that they form a convex polygon $\Omega$ in $\mathbb{R}^2$. Then we can solve the following linear system
$$\sum_{j\in N(v_i)}w_{ij}x_j = x_i \quad i = 1, 2, ... N_I;$$
$$\sum_{j\in N(v_i)}w_{ij}y_j = y_i \quad i = 1, 2, ... N_I;$$
$$x_i = b_i^x \quad i = N_I +1, N_I + 2, ... N_I + N_B = |V|;$$
$$y_i = b_i^y  \quad i = N_I +1, N_I + 2, ... N_I + N_B = |V|$$
where $N_I = |V_I|$ is the size of the set of interior vertices $V_I$, and $N_B = |V_B|$ is the size of the set of boundary vertices $V_B$. The solution to this linear system produces the coordinates of all the interior vertices in $\mathbb{R}^2$. We put the vertices in the positions given by their coordinates, and connect the vertices based on the combinatorics of the triangulation $T$. Tutte's theorem claims that the result is a geodesic triangulation of $\Omega$ with the combinatorial type of $T$.

The linear system above implies that the $x$-coordinate(or $y$-coordinate) of one interior vertex is a convex combination of the $x$-coordinates(or $y$-coordinates) of its neighbors.  Notice that the coefficient matrix of this system is not necessarily symmetric but it is diagonally dominant, so the solution exists uniquely.

 Tutte's theorem solves the embeddability  problem for a triangulation of a convex polygon. We can vary the coefficients $w_{ij}$ to construct families of geodesic triangulations of a convex polygon. This idea will lead to a simple proof of the contractibility of the space of geodesic triangulations of convex polygons \cite{luo2022spaces}.

\section{Drawing graphs with star-shaped Boundary}

In this section, we consider a star-shaped subset $\Omega$ of $\mathbb{R}^2$. An \textit{eye} of a star-shaped region $\Omega$ is a point $p$ in $\Omega$ such that for any other point $q$ in $\Omega$ the line segment $l(t) = tp + (1-t)q$ lies inside $\Omega$. The set of eyes of $\Omega$ is called the \textit{kernel} of $\Omega$. A set is called \textit{strictly star-shaped} if the interior of the kernel is not empty. 

In the case of polygons in $\mathbb{R}^2$, the kernel is the intersection of a family of closed half-spaces, each defined by the line passing one boundary edge of $\Omega$. Every closed half space contains a half disk in $\Omega$ centered at one point on its corresponding boundary edge. If the star-shaped polygon is strict, the intersection of the \textit{open} half-spaces is not empty. This means that we can pick an eye $e$ with a neighborhood $U$ of $e$ such that if $q\in U$, then $q$ is also an eye of $\Omega$.

The first question to address is how to construct a geodesic triangulation of a strictly star-shaped polygon $\Omega$ with a combinatorial type of $T$. As Bing and Starbird\cite{bing1978linear} pointed out, it was not always possible if there was a  edge connecting boundary vertices, called \emph{dividing edge}.  Assuming there was no dividing edge in $T$, they proved that such geodesic triangulations existed by induction. 

We give an explicit method to produce a geodesic triangulation for a strictly star-shaped polygon. We can regard all the edges $e_{ij}$ in $T$ as ideal springs with Hook constants $w_{ij}$. Fixing the boundary vertices, the equilibrium state corresponds to the critical point of the \textit{weighted length energy} defined as 
$$\mathcal{E} = \frac{1}{2}\sum_{e_{ij}\in E_I} w_{ij}L_{ij}^2$$ 
where $L_{ij}$ is the length of the edge connecting $v_i$ and $v_j$.  This energy can be regarded as a discrete version of the Dirichlet energy \cite{de1991comment, hass2012simplicial}, and it has a unique minimizer corresponding to the equilibrium state. Tutte's theorem guarantees that the equilibrium state is a geodesic embedding of $T$ if the boundary is a convex polygon.

Given a triangulation $T$ of a fixed strictly star-shaped polygon $\Omega$, assume that  the weighted length energy $\mathcal{E} $ satisfies $\sum_{e_{ij}\in E_I} w_{ij} = 1$. Notice that if the polygon is star-shaped but not convex, we can't choose arbitrary weights to generate a geodesic embedding of $T$. Hence we need to assign weights carefully to avoid singularities such as intersections of edges and degenerate triangles.

The idea is to distribute more and more weights to the interior edges connecting two interior vertices. As the weights for interior edges connecting two interior vertices tend to $1$, all the interior vertices will concentrate at a certain point. If we can choose this point to be an eye of the polygon, we will produce an geodesic embedding of $T$ of $\Omega$. 

Fix a polygon $\Omega$ with a triangulation $T$ and the coordinates $\{(b_j^x, b_j^y)\}_{i = N_I+1}^{|V|}$ for its boundary vertices. Given a set of coordinates in $\mathbb{R}^2$ for all the interior vertices $\{(x_i, y_i)\}_{i = 1}^{N_I}$, we define a family of weighted length energies with a parameter $0<\epsilon<1$ as
$$\mathcal{E}(\epsilon) = \frac{1 - \epsilon}{2M_I} \sum_{e_{ij}\in E_I^I} L_{ij}^2 + \frac{\epsilon}{2M_B} \sum_{e_{ij}\in E_I^B} L_{ij}^2$$
where $E_I^B$ is the set of all the interior edges connecting an interior vertex to a boundary vertex and $E_I^I$ is the set of all the interior edges connecting two interior vertices. Let $M_B = |E_I^B|$ and $M_I = |E_I^I|$. The edge lengths $L_{ij}$ are determined by the coordinates of the vertices
$$L_{ij}^2 = (x_i - x_j)^2 + (y_i - y_j)^2.$$

As $\epsilon \to 0$, most weights are assigned to interior edges in $E_I^I$, forcing all the interior vertices of the minimizer of $\mathcal{E}(\epsilon)$ to concentrate to one point. 
\begin{theorem}
Let $\Omega$ be a polygonal region with a triangulation $T$ of $\Omega$. Let $v^B_j = (x^B_j, y^B_j) = (b_j^x, b_j^y)$ for $j = 1, ..., N_B$ be the coordinates of the boundary vertices of $\Omega$ and $v^I_i(\epsilon) = (x^I_i(\epsilon), y^I_i(\epsilon))$ for $i = 1, ..., N_I$ be the coordinates of the interior vertices of the minimizer of the energy $\mathcal{E}(\epsilon)$. Then for all $i = 1, 2, ...., N_I$, 
$$\lim_{\epsilon \to 0}v^I_i = \lim_{\epsilon \to 0}(x^I_i(\epsilon), y^I_i(\epsilon)) = (x_0, y_0) = v_0$$
where  
$$v_0 =  \sum_{j =1}^{ N_B}\lambda_j v^B_j \quad \text{and} \quad \lambda_j = \frac{deg(v^B_j) - 2}{\sum_j deg(v^B_j - 2)} = \frac{deg(v^B_j) - 2}{M_B},$$
assuming $deg(v)$ is the degree of the vertex $v$ in $T$.
\end{theorem}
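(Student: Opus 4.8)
The plan is to turn the minimization into a linear system, extract one exact conserved linear relation satisfied by the minimizer for every $\epsilon$, and then read off the limit point from that relation once the interior vertices are shown to concentrate.

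First I would fix $\epsilon\in(0,1)$ and note that $\mathcal{E}(\epsilon)$ is a convex quadratic in the interior coordinates whose $x$- and $y$-parts decouple. Setting $\partial\mathcal{E}/\partial x_i^I=0$ at each interior vertex $v_i$ gives
\[
\left(\tfrac{1-\epsilon}{M_I}d_i^I + \tfrac{\epsilon}{M_B}d_i^B\right)x_i^I
 = \tfrac{1-\epsilon}{M_I}\sum_{\substack{j\sim i\\ \mathrm{int}}} x_j^I
 + \tfrac{\epsilon}{M_B}\sum_{\substack{j\sim i\\ \mathrm{bdry}}} x_j^B,
\]
where $d_i^I$ and $d_i^B$ count the interior and boundary neighbours of $v_i$. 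The weights on the right are positive and sum to the coefficient on the left, so each interior coordinate is a convex combination of its neighbours' coordinates. This discrete maximum principle does two jobs at once: it makes the associated matrix $\tfrac{1-\epsilon}{M_I}L_I+\tfrac{\epsilon}{M_B}D_B$ positive definite — where $L_I$ is the Laplacian of the subgraph induced on the interior vertices and $D_B=\mathrm{diag}(d_i^B)$ is a nonnegative diagonal — so the minimizer is unique; and, propagating the extremal coordinate through the connected triangulation until it reaches the boundary, it confines every $v_i^I(\epsilon)$ to the convex hull of the fixed boundary vertices. In particular $\{v_i^I(\epsilon)\}$ is uniformly bounded as $\epsilon\to 0$.

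Next comes the key step. Writing the $x$-stationarity equations in matrix form as $\big(\tfrac{1-\epsilon}{M_I}L_I+\tfrac{\epsilon}{M_B}D_B\big)\mathbf{x}(\epsilon)=\tfrac{\epsilon}{M_B}\mathbf{b}$ with $b_i=\sum_{j\sim i,\,\mathrm{bdry}}x_j^B$, I would left-multiply by the all-ones vector $\mathbf{1}$. Since $\mathbf{1}^\top L_I=0$, the entire interior--interior term drops out and we are left with the $\epsilon$-independent identity
\[
\sum_{i=1}^{N_I} d_i^B\, x_i^I(\epsilon) = \sum_{j=1}^{N_B} d_j^{*}\, x_j^B,
\]
valid for all $\epsilon\in(0,1)$, where $d_j^{*}$ is the number of interior neighbours of the boundary vertex $v_j^B$. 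Because $T$ carries no dividing edge, the only boundary neighbours of $v_j^B$ are its two cycle-neighbours, so $d_j^{*}=\deg(v_j^B)-2$ and $\sum_i d_i^B=\sum_j d_j^{*}=M_B$; this is exactly what makes the weights $\lambda_j=(\deg(v_j^B)-2)/M_B$ sum to one.

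Finally I would send $\epsilon\to 0$. Passing to the limit in the matrix equation, the right-hand side and the $D_B$-term vanish while $\mathbf{x}(\epsilon)$ stays bounded, so any subsequential limit $\mathbf{x}^{\ast}$ obeys $L_I\mathbf{x}^{\ast}=0$. The crux — and the step I expect to be the main obstacle — is upgrading this to a \emph{single} common limit: I must know that $\ker L_I=\mathbb{R}\mathbf{1}$, i.e. that the interior subgraph is connected, for otherwise the interior vertices could a priori concentrate at one point per interior component and the conclusion would fail. Granting connectivity (which I would deduce from the absence of dividing edges), $\mathbf{x}^{\ast}=x_0\mathbf{1}$, and substituting into the conserved identity gives $x_0 M_B=\sum_j(\deg(v_j^B)-2)x_j^B$, i.e. $x_0=\sum_j\lambda_j x_j^B$. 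Boundedness together with uniqueness of the limit point promotes this to convergence of the full family, and the identical argument for the $y$-coordinates yields $\lim_{\epsilon\to 0}v_i^I=\sum_j\lambda_j v_j^B=v_0$ for every $i$, as claimed.
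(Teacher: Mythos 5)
Your proposal is correct, and it proves the theorem by a genuinely different argument from the paper's. The paper works with the block decomposition $M(\epsilon)$ and analyzes the interior block, which in your notation is exactly $S(\epsilon)=\frac{1-\epsilon}{M_I}L_I+\frac{\epsilon}{M_B}D_B$: its key lemma is the spectral-perturbation statement $\lim_{\epsilon\to 0}\epsilon S(\epsilon)^{-1}=\mathbbm{1}$, proved by showing the zero eigenvalue of $S(0)$ is simple, computing the eigenvalue derivative $\frac{d\lambda}{d\epsilon}(0)=1/N_I$ from $v^TS(\epsilon)v=\epsilon/N_I$, and estimating the diagonalization in the $2$-norm; the limit is then read off from $x^I(\epsilon)=\epsilon S(\epsilon)^{-1}Wx^B\to\mathbbm{1}Wx^B$. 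You avoid the spectral analysis altogether: boundedness via the discrete maximum principle, the $\epsilon$-independent identity $\sum_i d_i^B x_i^I(\epsilon)=\sum_j d_j^{*}x_j^B$ obtained by applying $\mathbf{1}^\top$ to the system (which annihilates the Laplacian term), the observation that subsequential limits lie in $\ker L_I$ and hence are constant vectors, and the conserved identity to pin down the constant. Your route is more elementary, makes the origin of the weights $(\deg(v_j^B)-2)/M_B$ completely transparent, and generalizes verbatim to the reweighted energies $\mathcal{E}^W(\epsilon)$ that the paper uses afterwards; what the paper's heavier lemma buys is quantitative control of the whole solution operator $\epsilon S(\epsilon)^{-1}$ (with explicit error bounds), not merely the location of the limit point.

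One caveat, which applies equally to the paper: both proofs rest on connectivity of the subgraph induced on the interior vertices. You need it for $\ker L_I=\mathbb{R}\mathbf{1}$; the paper needs it for simplicity of the zero eigenvalue of $S(0)$ (its argument that $u_i=u_1$ ``for all $i$'' in fact only reaches the neighbors of $v_1$ unless one propagates along interior paths). You correctly identify this as the crux and attribute it to the absence of dividing edges (which is also what justifies $d_j^{*}=\deg(v_j^B)-2$), an assumption the theorem statement omits but the paper uses elsewhere; you assert rather than prove the implication. For completeness it does hold: with no dividing edges, every triangle of $T$ has an interior vertex, and any edge shared by two triangles has an interior endpoint, so the interior vertices of adjacent triangles are joined inside the interior subgraph, and connectivity follows from connectivity of the dual graph. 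Stating and proving this small combinatorial fact would close the only gap in your write-up.
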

\begin{proof}
The minimizer of $\mathcal{E}(\epsilon)$ satisfies the following linear system formed by taking derivatives with respect to $x_i$ and $y_i$ for all $i = 1, 2, ...., N_I$
$$\frac{1 - \epsilon}{M_I}\sum_{i\in N(v^I_k)} (v^I_k - v^I_i) + \frac{\epsilon}{M_B}\sum_{j\in N(v^I_k)} (v^I_k - v^B_j) = 0 \quad \text{for } k = 1, 2, \cdots N_I.$$
Notice that we separate the interior vertices $v_i^I \in V_I$ and the boundary vertices $v_j^B \in V_B$ in the summation. This system can be represented as 
$$M(\epsilon)x = b_x \quad M(\epsilon)y = b_y$$
where the variables are $$x = (x^I_1, x^I_2,..., x^I_{N_I}, x^B_1, ..., x^B_{N_B})^T$$ and $$y = (y^I_1, y^I_2,..., y^I_{N_I}, y^B_1, ..., y^B_{N_B})^T.$$ The boundary conditions are $$b_x = (0, 0, ..., 0, x^B_1, ..., x^B_{N_B})^T$$ and $$b_y = (0, 0, ..., 0, y^B_1, ..., y^B_{N_B})^T.$$
The coefficient matrix $M(\epsilon)$ is an $(N_I + N_B)\times(N_I + N_B)$ matrix, and it can be decomposed as
$$
M(\epsilon) = \begin{pmatrix}
  S(\epsilon)
  & \rvline & -\epsilon W \\
\hline
  0 & \rvline &
  Id
\end{pmatrix}
$$
where $W$ is an $N_I\times N_B$ matrix, $S(\epsilon)$ is a square matrix of size $N_I$, and $Id$ is the identity matrix of size $N_B$. 
The matrix $W$ is defined as
\[ W(i, j) = \left\{ \begin{array}{ll}
         \frac{1}{M_B} & \mbox{if $v_i^I$ is connected to $v_j^B$};\\
        0 & \mbox{if $v_i^I$ is not connected to $v_j^B$}.\end{array} \right. \] 
 The matrix $S$ is defined as
\[ S(i, j)(\epsilon) = \left\{ \begin{array}{ll}
          -\sum_{i\neq k} S(i, k) + \epsilon\sum_{k = 1}^{N_B} W(i, k)& \mbox{if $i = j$};\\
         -\frac{1 - \epsilon}{M_I} & \mbox{if $v_i^I$ is connected to $v_j^I$};\\
        0 & \mbox{if $v_i^I$ is not connected to $v_j^I$}.\end{array} \right. \] 

Notice that for the first $N_I$ rows in $M(\epsilon)$, the sums of their respective entries  are zero, and all the off-diagonal terms are non-positive. The matrix $W$ represents the relations of the boundary vertices with the interior vertices, and the sum of all its entries equals one. The matrix $S(\epsilon)$ is symmetric, strictly diagonally-dominant, and the sum of all its entries equals $\epsilon$. 

To show the limiting behavior of the solution to the system as $\epsilon \to 0$, we need the lemma below.
\begin{lemma}
Given the notations above, we have 
$$\lim_{\epsilon \to 0} \epsilon S(\epsilon)^{-1} = \mathbbm{1}$$
where the matrix $\mathbbm{1}$ is the $N_I\times N_I$ matrix with all entries equal to $1$.
\end{lemma}
\begin{proof}
Notice that $S(\epsilon)$ is symmetric and strictly diagonally dominant, so it is invertible. Let $S = S(0)$ and $M = M(0)$, then $S$ has an eigenvalue $\lambda = 0$ with the normalized eigenvector $v = (1/\sqrt{N_I}, 1/\sqrt{N_I}, ..., 1/\sqrt{N_I})^T$. 

First, we show that $\lambda = 0$ is a simple eigenvalue for $S$. If $S$ has another eigenvector $u = (u_1, u_2, ..., u_{N_I})^T$ corresponding to $\lambda = 0$ not parallel to $v$, then it is orthogonal to $v$ so $\sum_i u_i = 0$. Without loss of generality, we assume that $u_1>0$ achieves the maximal absolute value among $u_i$. Then we have 
$$Su = 0 \quad \Rightarrow \quad \sum_{i = 1}^{N_I} S(1, i)u_i = 0 \quad \Rightarrow \quad  S(1, 1)u_1 = - \sum_{i = 2}^{N_I} S(1, i)u_i.$$
Notice that $S$ is weakly diagonally dominant,  $S(1, 1)> 0$, and  $S(1, i) \leq 0$, so we can deduce that
$$S(1, 1)u_1 \geq - \sum_{i = 2}^{N_I}S(1, i)u_1 \quad \Rightarrow \quad - \sum_{i = 2}^{N_I}S(1, i)(u_i - u_1) \geq 0.$$
By our assumption, $u_i - u_1 \leq 0$ for all $i = 1, ..., N_I$, so the only possibility is $u_i = u_1$ for all $i$, which contradicts to the fact that $u$ is orthogonal to $v$. Hence all the other eigenvalues of $S$ are positive by Gershgorin circle theorem. (See, e.g. \cite{golub2013matrix})

Second, we show that the eigenvalue $\lambda(\epsilon)$ of $S(\epsilon)$ approaching to $0$ satisfies 
$$\lim_{\epsilon\to 0}\frac{\lambda(\epsilon)}{\epsilon} = \frac{1}{N_I}.$$
This means that the derivative $(d\lambda/d\epsilon)(0) = 1/N_I$.
To compute the derivative, notice that the sum of all the entries of $S(\epsilon)$ is $\epsilon$, hence we have

\begin{align*}
v^TS(\epsilon)v = \frac{1}{N_I}
    (1, 1, ..., 1)S(\epsilon) & \begin{pmatrix}
           1 \\
           1 \\
           \vdots \\
          1
         \end{pmatrix} = \frac{\epsilon}{N_I}.
  \end{align*}
The derivative of a simple eigenvalue of a symmetric matrix is given in  \cite{petersen2008matrix} by 
$$\frac{d\lambda}{d\epsilon}(0)= \frac{d(v^TS(\epsilon)v)}{d\epsilon}  = \frac{d(\epsilon/N_I)}{\epsilon} = \frac{1}{N_I}.$$
  
  Finally, we are ready to prove the lemma.  Since $S(\epsilon)$ is symmetric, we have the diagonalization with an orthonormal matrix $P(\epsilon)$
 $$\epsilon S^{-1}(\epsilon) = P(\epsilon)\begin{pmatrix}
           \epsilon\lambda_1^{-1}(\epsilon) \\
           &\epsilon \lambda_{2}^{-1}(\epsilon)  \\
           & & \ddots \\
          & & & \epsilon \lambda_{N_I}^{-1}(\epsilon) 
         \end{pmatrix}P^T(\epsilon).$$
Without loss of generality, we assume the first eigenvalue  $\lim_{\epsilon\to 0}\lambda_1(\epsilon) = 0$. Given any $0<\delta<1$, we can choose small $\epsilon>0$ such that  the following three inequality holds
 $$\lambda_i(\epsilon) > C > 0 \text{ for } i = 2, 3, ..., N_i;$$
  $$||P(\epsilon)\begin{pmatrix}
           \epsilon\lambda_1^{-1}(\epsilon) \\
           &\epsilon \lambda_{2}^{-1}(\epsilon)  \\
           & & \ddots \\
          & & & \epsilon \lambda_{N_I}^{-1}(\epsilon) 
         \end{pmatrix}P^T(\epsilon) - P(\epsilon)\begin{pmatrix}
           N_I \\
           &0 \\
           & & \ddots \\
          & & & 0
         \end{pmatrix}P^T(\epsilon)||_2< \delta;$$
         and the eigenvector $v_1(\epsilon)$ of $S(\epsilon)$ corresponding to the eigenvector $\lambda_1(\epsilon)$ satisfies $$||v_1(\epsilon) - \frac{1}{\sqrt{N_I}}\begin{pmatrix}
           1 \\
           1 \\
           \vdots \\
          1
         \end{pmatrix}||_\infty < \delta.$$ 
        Notice that the columns of $P(\epsilon) = (v_1, v_2, ..., v_{N_I})$ form a set of the orthonormal basis formed by eigenvectors $v_i$, where the first eigenvector $v_1(\epsilon)$ approaches $v = (1/\sqrt{N_I}, ..., 1/\sqrt{N_I})$. Then we have
         \begin{align*}
         ||\epsilon S^{-1}(\epsilon) - \mathbbm{1}||_2 \leq & ||P(\epsilon)\begin{pmatrix}
           \epsilon\lambda_1^{-1}(\epsilon) \\
            & \ddots \\
            & &\epsilon \lambda_{N_I}^{-1}(\epsilon) 
         \end{pmatrix}P^T(\epsilon) - P(\epsilon)\begin{pmatrix}
           N_I \\
            & \ddots \\
          & & 0
         \end{pmatrix}P^T(\epsilon)||_2  \\
         + &   || P(\epsilon)\begin{pmatrix}
           N_I \\
            &\ddots \\
           & & 0
         \end{pmatrix}P^T(\epsilon) - \mathbbm{1}||_2 \leq \delta + || N_Iv_1^T(\epsilon)v_1(\epsilon) - \mathbbm{1}||_2. \\
         \end{align*}
        Notice that 
        $$|| N_Iv_1^T(\epsilon)v_1(\epsilon) - \mathbbm{1}||_2 \leq 2N_I^2\delta.$$
        Hence
        $$||\epsilon S^{-1}(\epsilon) - \mathbbm{1}||_2 \leq (1 + 2N_I^2)\delta.$$
        
  \end{proof}
  
The inverse of the matrix $M(\epsilon)$ can be represented as
$$
M^{-1}(\epsilon) = \begin{pmatrix}
  S^{-1}(\epsilon)
  & \rvline & \epsilon S^{-1}(\epsilon) W \\
\hline
  0 & \rvline &
  I
\end{pmatrix}.
$$
Then the solution of the linear system $M(\epsilon)x = b_x$ is $x = M^{-1}(\epsilon)b_x$, whose first $N_I$ entries are given by
$$\begin{pmatrix}
           x^I_1(\epsilon) \\
           x^I_2(\epsilon) \\
           \vdots \\
          x^I_{N_I}(\epsilon)
         \end{pmatrix} = \epsilon S^{-1}(\epsilon)W\begin{pmatrix}
           x^B_1\\
           x^B_2 \\
           \vdots \\
          x^B_{N_B}
         \end{pmatrix}. $$
         As $\epsilon \to 0$, the solution approaches $\mathbbm{1}Wx^B$. All the $x_i^I$ approach the same point
         $$\lim_{\epsilon \to 0}x_i^I =  (1, ..., 1)W \begin{pmatrix}
           x^B_1\\
           x^B_2 \\
           \vdots \\
          x^B_{N_B}
         \end{pmatrix}  = \sum_{i  = 1}^{N_B} \frac{deg(v_i^B) - 2}{N_B}x_i^B.$$
         A similar result holds for $y$-coordinates of the interior vertices. Hence we conclude the limit of the solutions $\lim_{\epsilon \to 0}v^I_i = v_0$.
\end{proof}
Notice that the matrix $W$ can be replaced with more general matrices. The original energy $\mathcal{E}(\epsilon)$ distributes $\epsilon$ percentage of weights evenly to all the edges in $E^B_I$. We can define new energies by redistributing the weights

$$\mathcal{E}^W(\epsilon) = \frac{1 - \epsilon}{2M_I} \sum_{e_{ij}\in E_I^I}L_{ij}^2 + \frac{\epsilon}{2} \sum_{e_{ij}\in E_I^B}w_{ij} L_{ij}^2$$
with $w_{ij}>0$ and $\sum_{(i, j)\in E_I^B}w_{ij} = 1$. The matrix $W$ is defined as
\[ W(i, j) = \left\{ \begin{array}{ll}
         w_{ij} & \mbox{if $v_i^I$ is connected to $v_j^B$};\\
        0 & \mbox{if $v_i^I$ is not connected to $v_j^B$}.\end{array} \right. \] 
  The limit of the solution is
  $$v_0 =  \sum_{j = 1}^{N_B}\lambda_j v^B_j \quad \text{ where }\quad \lambda_j = \sum_{i = 1}^{N_I} w_{ij}.$$

  To construct a geodesic triangulation, pick an eye $e$ of $\Omega$ such that $e = \sum_{i  =1}^{N_B} \lambda_i v_i^B$ where $\lambda_i >0$ and  $\sum_{i  =1}^{N_B} \lambda_i = 1$, then define 
   \[ W(i, j) = \left\{ \begin{array}{ll}
         w_{ij}  = \frac{\lambda_i}{deg(v_j^B) - 2}& \mbox{if $v_i^I$ is connected to $v_j^B$};\\
        0 & \mbox{if $v_i^I$ is not connected to $v_j^B$}.\end{array} \right. \] 
        and the corresponding energy $\mathcal{E}^W(\epsilon)$. The remaining task is to show that the critical point of $\mathcal{E}^W(\epsilon)$ is a geodesic embedding of $T$ for small $\epsilon$.
        
  If $\Omega$ is not convex, there exists a \textit{reflex vertice}, defined as a boundary vertice of $\Omega$ where the turning angle is negative. We use the result by Gortler, Gotsman and Thurston\cite{gortler2006discrete} to show that the minimizer of $\mathcal{E}^W(\epsilon)$ constructed above is an embedding for some $\epsilon>0$.
\begin{theorem}[\cite{gortler2006discrete}]
Given a strictly star-shaped polygon $\Omega$ with a triangulation $T$ without dividing edges, if the reflex vertices of $\Omega$ are in the convex hull of their respective neighbors, then the solution to the linear system generates a straight-line embedding of $T$.
\end{theorem}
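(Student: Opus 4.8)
The plan is to reduce the embedding statement to a purely local orientation condition and then establish that condition with a discrete maximum principle. Let $\phi\colon |T|\to\mathbb{R}^2$ denote the simplexwise-linear map that sends each vertex to its position in the solution of the linear system and is affine on each triangle. By hypothesis $\phi$ maps the boundary cycle of $T$ homeomorphically onto the simple closed polygon $\partial\Omega$, and the target is simply connected, so I would first invoke piecewise-linear degree theory: for every $p$ off the image of the $1$-skeleton the signed count
$$\deg(\phi,p)=\sum_{f:\,p\in\phi(f)}\operatorname{sign}(f)$$
is locally constant and equals the winding number of $\partial\Omega$ about $p$, namely $1$ inside $\Omega$ and $0$ outside. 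Consequently, \emph{if every triangle of $T$ has strictly positive signed area} under $\phi$, the signed count becomes an honest count: each interior point of $\Omega$ is covered exactly once and nothing outside is covered, so $\phi$ is injective on the open triangles, no edges cross, and $\phi$ is an embedding onto $\Omega$. The entire problem is thereby reduced to showing that no triangle is flipped or degenerate.

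To control the signs of the triangles I would study the family of height functions $g_\theta=\langle\phi(\cdot),\theta\rangle$ indexed by a generic unit direction $\theta$ (generic meaning all vertices receive distinct heights). Classifying each vertex as a local maximum, minimum or saddle of $g_\theta$ according to the number of sign changes of $g_\theta(v_i)-g_\theta(v)$ around its combinatorial link, the discrete Poincar\'e--Hopf identity on a disk reads
$$\sum_{v}\operatorname{ind}_\theta(v)=\chi(|T|)=1,$$
with interior and boundary vertices weighted as in the combinatorial Morse theory of surfaces with boundary. The decisive structural input is the \textbf{discrete maximum principle}: because the linear system makes each interior value $g_\theta(v)$ a strict convex combination of the heights of its neighbors, no interior vertex is ever a local extremum of any $g_\theta$, so it contributes index at most $0$; and the hypothesis that every reflex vertex lies in the convex hull of its neighbors propagates exactly this conclusion to the reflex boundary vertices, so they too can never be spurious extrema. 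The \emph{no dividing edge} assumption is what gives this hypothesis teeth: it forces each reflex vertex to have interior neighbors, so that its star is a genuine disk and the convex-hull condition can actually hold --- precisely the feature absent in the Bing--Starbird counterexamples.

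It remains to feed the index constraints into the identity. For a simple polygon and a generic direction the boundary local maxima and minima of $g_\theta$ come in matched pairs, and the maximum principle has just removed all the reflex ones, leaving only convex boundary vertices as possible extrema; the star-shaped geometry then pins the surviving boundary contribution down to the minimum compatible with $\chi=1$. This forces every interior and reflex index to equal $0$ and forbids interior saddles, i.e.\ for every direction $\theta$ the link of each interior vertex has exactly one ascending and one descending arc about $\phi(v)$. Since this two-sign-change property relative to the center holds in all directions simultaneously, every line through $\phi(v)$ meets the link polygon exactly twice, so the vectors from $\phi(v)$ to its neighbors wind monotonically once around $\phi(v)$; this is equivalent to all star triangles being positively oriented, and ranging over vertices rules out every flip and degeneracy, completing the reduction of the first paragraph. \textbf{The main obstacle is the boundary bookkeeping in the Poincar\'e--Hopf identity}: assigning the correct indices to convex versus reflex boundary vertices, verifying that simplicity of $\partial\Omega$ together with the star-shaped hypothesis yields exactly the single-sheeted boundary count, and disposing of collinear and tie configurations through a genericity/perturbation argument in $\theta$. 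The passage from the scalar Morse data in all directions to the genuine two-dimensional orientation of the stars is the other delicate point, but it is a standard consequence once the index vanishing is in hand.
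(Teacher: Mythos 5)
The theorem you are proving is stated in the paper only as an imported result, cited to Gortler--Gotsman--Thurston \cite{gortler2006discrete} with no proof given, so the relevant comparison is with the argument in that source. Your overall strategy matches it in spirit: the degree-theoretic reduction (all triangles positively oriented plus boundary winding number $1$ implies embedding), the use of directional height functions $g_\theta$ (these are exactly the discrete one-forms of \cite{gortler2006discrete}), and the observation that the convex-combination property at interior vertices and the convex-hull hypothesis at reflex vertices exclude local extrema there, with the no-dividing-edge assumption guaranteeing reflex vertices have interior neighbors. All of that is sound.

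However, the ``boundary bookkeeping'' that you explicitly defer is not a routine verification --- it is the actual content of the theorem --- and the mechanism you assert for it is wrong. You claim that once reflex extrema are excluded, only convex boundary vertices remain as possible extrema and that star-shapedness ``pins the surviving boundary contribution down to the minimum compatible with $\chi=1$,'' i.e.\ essentially a single max/min pair. For a non-convex star-shaped polygon this fails: take a regular five-pointed star with one tip up and a generic near-vertical direction. Then the two lower tips (convex vertices) are \emph{both} local minima of $g_\theta$, the top tip is a maximum, and $g_\theta|_{\partial\Omega}$ has a second local maximum at the reflex vertex between the two lower legs. So the boundary contributes strictly more than the minimum compatible with $\chi=1$, and the Poincar\'e--Hopf identity is rescued not by scarcity of extrema but by the reflex vertices carrying strictly \emph{negative} index: under the convex-hull hypothesis, a reflex vertex at which $g_\theta|_{\partial\Omega}$ is locally extremal has interior neighbors on both sides of its level and becomes a boundary saddle. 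In particular your assertion that ``every interior and reflex index equals $0$'' cannot be right as stated --- generically some reflex indices must be negative. What the proof actually requires, and what you neither state nor prove, is the exact balance: for any simple closed polygon and generic $\theta$, the number of boundary extrema of $g_\theta|_{\partial\Omega}$ occurring at convex vertices exceeds the number occurring at reflex vertices by exactly $2$ (a turning-number identity), combined with sharp per-vertex index bounds (interior vertices and reflex boundary extrema contribute at most their saddle-type values, convex boundary extrema at most $+1$, suitably normalized), so that the global count forces every inequality to be an equality --- in particular every interior vertex has exactly two sign changes in every generic direction. Without this identity and the per-vertex bounds, the index sum only constrains a total and leaves room for interior saddles compensated elsewhere; your final step (two sign changes in all directions implies monotone winding of the link, hence positive triangles) is fine, but it is fed by precisely the claim you left open. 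As written, the proposal reproduces the easy half of the cited proof and replaces its decisive counting lemma with an assertion that is false in the first nontrivial example.
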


  \begin{theorem}
   Given a strictly star-shaped polygon $\Omega$ with a triangulation $T$ without dividing edges, and an eye $e$ in $\Omega$ with coefficients $W$,  there exists an $\epsilon>0$ such that the critical point of the energy $\mathcal{E}^W(\epsilon)$ generates a geodesic embedding of $T$. 
  \end{theorem}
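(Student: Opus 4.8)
The plan is to reduce the statement to the hypothesis of the Gortler--Gotsman--Thurston criterion quoted above and then to verify that hypothesis for all sufficiently small $\epsilon$ using the limiting behavior already established. First I would observe that the minimizer of $\mathcal{E}^W(\epsilon)$ satisfies the equilibrium equations obtained by setting the gradient to zero, and these equations express every interior vertex as a strict convex combination, with positive weights, of its neighbors. This is exactly the linear system appearing in the cited embedding theorem, so it suffices to produce an $\epsilon>0$ for which every reflex vertex of $\Omega$ lies in the convex hull of the positions of its neighbors in that minimizer. Since the weights $W$ were chosen precisely so that the common limit point $v_0 = \sum_j \lambda_j v^B_j$ equals the prescribed eye $e$, the convergence result proved above tells us that as $\epsilon \to 0$ every interior vertex of the minimizer tends to $e$.

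The heart of the argument is a geometric fact about a single reflex vertex. Let $v^B_r$ be a reflex vertex with boundary neighbors $v^B_{r-1}$ and $v^B_{r+1}$. Because $e$ is an eye, the triangles $e\,v^B_{r-1}\,v^B_r$ and $e\,v^B_r\,v^B_{r+1}$ both lie in $\Omega$ and have disjoint interiors, so their union is a simple quadrilateral $Q = e\,v^B_{r-1}\,v^B_r\,v^B_{r+1}$ whose interior angle at $v^B_r$ coincides with the reflex interior angle of $\Omega$ there. A simple quadrilateral has at most one reflex vertex (its interior angles sum to $2\pi$), and such a vertex lies in the open triangle spanned by the remaining three vertices; applying this to $Q$ shows that $v^B_r$ lies in the \emph{open} triangle $\triangle(e, v^B_{r-1}, v^B_{r+1})$, i.e.\ $v^B_r$ is a strict convex combination of $e$, $v^B_{r-1}$, and $v^B_{r+1}$. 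I would choose the eye $e$ in the interior of the kernel generically, so that no triple $\{e, v^B_{r-1}, v^B_{r+1}\}$ is collinear; this also supplies the representation $e = \sum_i \lambda_i v^B_i$ with all $\lambda_i>0$ needed to define $W$, since the interior of the kernel lies in the interior of the convex hull of the boundary vertices.

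Next I would upgrade this to the perturbed configuration. Because $T$ has no dividing edges, every neighbor of $v^B_r$ other than $v^B_{r-1}$ and $v^B_{r+1}$ is an interior vertex, and there is at least one such neighbor; by the limit theorem each of these neighbor positions tends to $e$ as $\epsilon \to 0$. Since membership in an open triangle is an open condition and the barycentric coordinates of $v^B_r$ depend continuously on the triangle's vertices, the relation $v^B_r \in \triangle(e, v^B_{r-1}, v^B_{r+1})$ persists when $e$ is replaced by any interior-neighbor position sufficiently close to $e$. Hence $v^B_r$ lies in the convex hull of its neighbors' positions once $\epsilon$ is small enough. Running this over the finitely many reflex vertices and taking the minimum of the resulting thresholds yields a single $\epsilon>0$ for which all reflex vertices satisfy the convex-hull condition; the Gortler--Gotsman--Thurston criterion then gives a straight-line embedding of $T$ with the prescribed boundary, which, its edges being Euclidean segments, is the desired geodesic triangulation.

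The main obstacle is the geometric step of the second paragraph: making precise that the eye condition forces each reflex vertex into the \emph{open} triangle determined by its two boundary neighbors and the eye, and that this containment is robust under the $\epsilon$-perturbation of the interior vertices. Once this is in hand, the remainder is a direct combination of the convergence theorem and the cited embedding criterion, with the degenerate collinear configurations excluded by the generic choice of $e$.
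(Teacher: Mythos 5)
Your proposal is correct, and it follows the paper's overall strategy: invoke the Gortler--Gotsman--Thurston criterion to reduce everything to checking that each reflex vertex lies in the convex hull of its neighbors' positions, and use the convergence theorem (with the weights $W$ tuned so that $v_0=e$) to control the interior vertices for small $\epsilon$. Where you genuinely diverge is in how that convex-hull condition is verified. The paper chooses $\epsilon$ so small that the interior vertices of the critical point are themselves \emph{eyes} of $\Omega$ (possible because $e$ lies in the open kernel, so a whole neighborhood of $e$ consists of eyes), and then argues by contradiction: if a reflex vertex $v_r$ were not in the convex hull of its neighbors, all edges at $v_r$ would lie in a closed half-plane, while the edge from $v_r$ to an adjacent interior vertex (an eye) splits the reflex angle into two angles summing to more than $\pi$, forcing one of them past $\pi/2$. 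Your argument is direct instead: the eye property puts $v_r$ strictly inside the open triangle $\triangle(e,v_{r-1},v_{r+1})$ (via the simple-quadrilateral fact), and openness of that containment lets you replace $e$ by any interior neighbor's position once it is close enough. Each route has something to recommend it. Yours only needs the interior vertices to be \emph{near} $e$, not to be eyes, and it sidesteps a genuinely shaky step in the paper: the claim that vectors lying in a common closed half-plane have pairwise non-negative inner products is false as stated (e.g.\ two nearly opposite vectors in the upper half-plane), so the paper's contradiction needs repair along lines close to what you do, whereas your open-triangle argument is clean and quantitative. The price is that you must establish the quadrilateral facts you flag (disjoint interiors of the two triangles, and the reflex vertex of a simple quadrilateral lying in the open triangle of the other three); both are true and standard. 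Two small remarks: your genericity assumption on $e$ is unnecessary --- for $e$ in the open kernel, collinearity of $e, v_{r-1}, v_{r+1}$ is impossible, since the angle sum of the simple quadrilateral $Q$ is $2\pi$ and the angle at $v_r$ already exceeds $\pi$ --- and it sits slightly awkwardly with the theorem's phrasing, in which the eye $e$ is given rather than chosen; like the paper, you also implicitly need $e$ to lie in the \emph{interior} of the kernel, which is where strict star-shapedness enters.
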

  \begin{proof}
  Theorem 4.3 implies that we only need to check that the reflex vertices $v_r$ are in the convex hulls of their respective neighbors. 
  
  Choose an $\epsilon$ small enough such that the vertices of the critical point of $\mathcal{E}^W(\epsilon)$ defined above are eyes of $\Omega$. Assume $v_r$ is a reflexive point on the boundary of $\Omega$.  Let $v$ be an interior vertex of the geodesic triangulation in the star of $v_r$, and let $v_1$ and $v_2$ be the two boundary vertices connecting to $v_r$. Since there is no dividing edge in $T$, $v_1$ and $v_2$ are the only boundary vertices connecting to $v_r$. We want to show that $v_r$ is in the convex hull of its neighbors. 
  
 Assume the opposite, then all the edges connecting to $v_r$ lie in a closed half plane, so the inner product of any pair of three vectors $\overrightarrow{v_rv_1}$, $\overrightarrow{v_rv_2} $ and $\overrightarrow{v_rv}$ is non-negative. But the inner angle at $v_r$ is larger than $\pi$, then either angle $\angle v_1v_rv$ or $\angle vv_rv_2$ is strictly larger than than $\frac{\pi}{2}$, which means one inner product is negative. This leads to a contradiction.
  \end{proof}
  This result solves the embeddability problem for strictly star-shaped polygons $\Omega$ with a triangulation $T$. We can construct a geodesic triangulation of $\Omega$ as follows. Pick an eye $e$ of $\Omega$ with the coefficients $W$ defined above. Then choose $\epsilon = 1/2$ and solve the linear system corresponding to the critical point of $\mathcal{E}^W(1/2)$. If the solution is not an embedding, replace $\epsilon$  by $\epsilon/2$ and continue. 
  
  One interesting question is whether we can realize any given geodesic triangulation in $\mathcal{GT}(\Omega, T)$ as the critical point of certain weighted length energy by choosing appropriate weights. Unfortunately, this is not the case. The example in Eades, Healy, and Nikolov \cite{eades2018weighted} below shows that not every geodesic triangulation of a disk can be realized as the minimizer of certain energy. It closely related to Schönhardt polyhedron.

  \begin{figure}[h!]
 \includegraphics[width=0.4\linewidth]{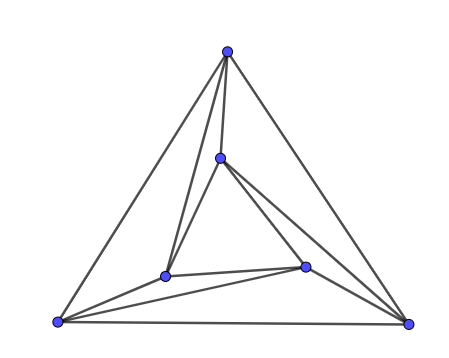}
 \caption{Example in \cite{eades2018weighted} which is not a minimizer.}
\end{figure}

\newpage
\section{Contractibility of spaces of geodesic triangulations}

  The field of discrete differential geometry features the discretization of the whole theory of classical geometry and topology of surfaces, in contrast to traditional numerical methods focusing on estimating the approximation errors. Recently, important progress has been made in this direction, including the theory of discrete conformal maps, discrete harmonic maps, and discrete uniformization theorems \cite{gu2018discrete, gu2018discrete2,  gu2019convergence, luo2020discrete, sun2015discrete, wu2015rigidity, wu2014finiteness, wu2020convergence}. 
  
  The spaces of geodesic triangulations can be regarded as discretization of groups of surface diffeomorphisms. It is conjectured in \cite{connelly1983problems} that $\mathcal{GT}(S, T)$ is homotopy equivalent to the group of isometries of $S$ homotopic to the identity, when $S$ is equipped with a metric with constant curvature. This conjecture has been confirmed by \cite{luo2021deformation, luo2021deformation2, erickson2021planar} in the cases of hyperbolic surfaces and flat tori. Progress has been made in the case of spheres in special cases of Delaunay triangulations \cite{luo2022deformation3}, but the general case remains open. These results can be regarded as discrete versions of classical results Earle and Eells \cite{earle1969fibre} on homotopy types of surface diffeomorphisms.

In this section we show the contractibility of $\mathcal{GT}(\Omega, T)$ if $\Omega$ is a non-convex quadrilateral.

\subsection{Contractibility of spaces of geodesic triangulations of quadrilateral}
We recall the following definitions from \cite{luo2022spaces}. An embedded $n$-sided polygon $\Omega$ in the plane is determined by a map $\phi$ from the set $V_B = \{v_1,v_2, \cdots, v_n, v_{n+1} = v_1\}$ to $\mathbb{R}^2$, and line segments connecting the images under $\phi$ of two consecutive vertices in $V_B$ . We assume that $T = (V, E, F)$ is a triangulation of $\Omega$ with vertices $V$, edges $E$ and faces $F$ such that  $V = V_I\cup V_B$, where $V_I$ is the set of interior vertices and $V_B$ is the set of boundary vertices.

A \textit{geodesic triangulation} with combinatorial type  $T$ of $\Omega$ is an embedding $\psi$ of the $1$-skeleton of $T$ in the plane  such that $\psi$ agrees with $\phi$ on $V_B$, and $\psi$ maps every edge in $E$ to a line segment parametrized by arc length. The set of these maps is called the \textit{space of geodesic triangulations on $\Omega$ with combinatorial type $T$}, and is denoted by $\mathcal{GT}(\Omega, T)$. Each geodesic triangulation is uniquely determined by the positions of the interior vertices in $V_I$, so the topology of $\mathcal{GT}(\Omega, T)$ is induced by $\Omega^{|V_I|} \subset \mathbb{R}^{2|V_I|}$.

Let $P$ is a non-convex quadrilateral with one \emph{reflexive vertex} $v_0$, meaning that the inner angle at $v_0$ is larger than $\pi$. Let $T$ be a triangulation of $P$. Let $\tilde{P}$ denote the convex hull of $P$, which in this case is a triangle. Since  $\tilde{P}$ is convex,  $\mathcal{GT}(\tilde{P}, \tilde T)$ is a contractible space, where $v_0$ is regarded as an interior vertex  . 

The idea is that we can construct a fibration $\mathcal{GT}(\tilde{P}, \tilde T)$ to the interior of $\tilde{P}$ denoted by $\Delta$. This map is defined by sending a geodesic triangulation $\tau \in X(\tilde{P})$ to the position of $v_0 \in \Delta$.

$$\pi: X(\tilde{P}) \to \Delta$$
$$\pi(\tau) = \tau(v_0)$$

\begin{figure}[h!]
  \includegraphics[width=0.3\linewidth]{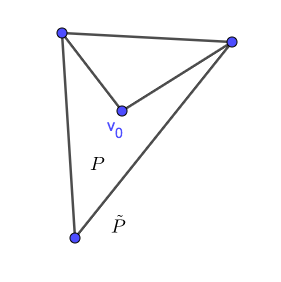}
\end{figure}

\begin{theorem}
 If $P$ is a quadrilateral with a triangulation $T$, then $\mathcal{GT}(P, T)$ is contractible. 
\end{theorem}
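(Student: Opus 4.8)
The plan is to use the fibration
$$\pi: X(\tilde{P}) = \mathcal{GT}(\tilde{P}, \tilde T) \to \Delta, \qquad \pi(\tau) = \tau(v_0)$$
together with the fact that the total space $\mathcal{GT}(\tilde{P}, \tilde T)$ is contractible, since $\tilde P$ is a triangle (a convex polygon) and we may invoke the Bloch--Connelly--Henderson result cited in the introduction. The key observation is that $\mathcal{GT}(P, T)$ is precisely a fiber of $\pi$. Indeed, a geodesic triangulation of the non-convex quadrilateral $P$ is the same data as a geodesic triangulation of the triangle $\tilde P$ in which the extra vertex $v_0$ has been fixed at the prescribed reflex-vertex position: once $v_0 = \tau(v_0)$ is pinned to the boundary point of $P$, the remaining interior vertices are exactly the interior vertices of $T$, and the embedding conditions coincide. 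So $\mathcal{GT}(P, T) = \pi^{-1}(v_0)$, where here $v_0 \in \partial P \subset \tilde P$ is the location of the reflex vertex.

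First I would verify that $\pi$ is a fibration (or at least a fibration over the relevant region), so that I can transport homotopy-theoretic information between the total space, the base, and the fibers. The base $\Delta = \mathrm{int}(\tilde P)$ is contractible, and the total space is contractible; from the long exact sequence of homotopy groups of the fibration $\pi^{-1}(p) \hookrightarrow X(\tilde P) \xrightarrow{\pi} \Delta$, every fiber over an interior point $p$ is weakly contractible, and in fact all fibers over interior points are homotopy equivalent. The subtlety is that our point of interest, the reflex vertex, lies on $\partial P$ rather than in the interior of $\tilde P$; I would handle this by a limiting or local-triviality argument, pushing $v_0$ from an interior point of $\tilde P$ out toward its boundary position while tracking that the fiber does not change homotopy type, or by checking directly that $\pi$ is locally trivial in a neighborhood extending to the needed boundary point.

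The main technical content is to understand the fiber $\pi^{-1}(p)$ concretely and to show it is contractible for the relevant $p$, and then to identify it with $\mathcal{GT}(P,T)$. When $p$ is placed at the reflex position, a configuration in $\pi^{-1}(p)$ is a geodesic triangulation of $\tilde P$ with $v_0$ fixed; the two triangles of $\tilde P$ adjacent to $v_0$ (on the two sides of the segments $\overrightarrow{v_0 v_1}, \overrightarrow{v_0 v_2}$) degenerate in a controlled way, and the union of the remaining triangles fills out exactly $P$. I would argue that once $v_0$ is fixed at the boundary point, the embedding condition on $\tilde P$ forces the configuration to be a valid geodesic triangulation of $P$ and conversely, making the identification $\pi^{-1}(v_0) = \mathcal{GT}(P,T)$ an equality of spaces, not merely a homotopy equivalence.

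The hard part, and the step I expect to be the main obstacle, is rigorously justifying that the fiber over the \emph{boundary} point $v_0 \in \partial P$ inherits the contractibility enjoyed by fibers over \emph{interior} points. Fibration-theoretic arguments give contractibility of fibers over interior points immediately, but degeneration as $p \to \partial P$ can change the topology if triangles collapse or the embedding constraint becomes degenerate; establishing that the fiber varies continuously (indeed that $\pi$ remains a fibration, or at least a quasifibration, up to and including the boundary point of interest) is the crux. Assuming $\pi$ extends to a fibration over a neighborhood of $v_0$ in $\overline{\Delta}$, contractibility of $\mathcal{GT}(P,T) = \pi^{-1}(v_0)$ follows from the long exact sequence together with the contractibility of both $X(\tilde P)$ and $\Delta$.
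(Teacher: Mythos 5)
Your skeleton matches the paper's (the same map $\pi(\tau)=\tau(v_0)$ on $\mathcal{GT}(\tilde P,\tilde T)$, contractibility of total space and base, identification of $\mathcal{GT}(P,T)$ with a fiber), but two genuine problems remain. The first is a geometric misconception: the reflex vertex of a non-convex quadrilateral lies strictly \emph{inside} the convex hull triangle $\tilde P$. It is on $\partial P$, but $\partial P\neq\partial\tilde P$, and the reflex position belongs to $\Delta=\mathrm{int}(\tilde P)$. So the fiber you need sits over an \emph{interior} point of the base; there is no degeneration as the point approaches the boundary, and no triangles collapse in the identification $\pi^{-1}(v_0)=\mathcal{GT}(P_{v_0},T)$: the triangulation $\tilde T$ is just $T$ with one extra face $v_1v_0v_2$ filling $\tilde P\setminus P$, and that face is a perfectly non-degenerate triangle when $v_0$ is at the reflex position. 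The step you single out as the crux --- extending the fibration up to a boundary point of $\tilde P$ --- is a phantom problem created by conflating $\partial P$ with $\partial\tilde P$, and your plan spends its effort there.

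The second problem is the real one: you never prove that $\pi$ is a fibration; you assume it (``Assuming $\pi$ extends to a fibration\dots''), and that assumption is essentially the content of the theorem. The paper's key idea, absent from your proposal, is the projective-transformation trick: for $v_0,v_0'\in\Delta$ there is a unique projective transformation fixing the three hull vertices and sending $v_0$ to $v_0'$ (in barycentric coordinates with respect to the triangle it rescales the coordinates, hence maps the open triangle to itself). Since projective maps preserve collinearity, it carries geodesic triangulations of $P_{v_0}$ to geodesic triangulations of $P_{v_0'}$, giving fiber-preserving homeomorphisms between fibers that depend continuously on $v_0'$. This simultaneously establishes the fibration property and trivializes it: $X(\tilde P)\cong\Delta\times\pi^{-1}(v_0)$, so $\pi^{-1}(v_0)=\mathcal{GT}(P,T)$ is a retract of a contractible space and hence contractible --- genuinely, not merely weakly, which is all a long-exact-sequence argument would give. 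Without this ingredient, or some substitute for it, your argument deduces the conclusion from an unproved hypothesis.
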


\begin{proof}
The statement is true if $P$ is a convex. If $P$ is not convex, we prove that the projection $\pi$ is a fibration, using the fact that there exists a unique projective transformation mapping a given non-convex quadrilateral to another non-convex quadrilateral. 

 For any given $v_0\in \tilde{P}$, $\pi^{-1}(v_0)$ is the space of geodesic triangulations of the quadrilateral with the reflexive vertex $v_0$, denoted as $X(P_{v_0})$. Fix an element $\tau \in \pi^{-1}(v_0)$. For any other element $v_0' \in \Delta$, the unique projective transformation $\phi$ mapping $P_{v_0}$ to $P_{v_0'}$ sends $\tau$ to an element $\phi(\tau)\in \pi^{-1}(v_0') =X(P_{v_0'})$. This projective transformation clearly depends continuously on $v_0'$. Hence this gives a global section $\sigma: \Delta \to X(\tilde{P})$. Since $X(\tilde{P})$ and $\Delta$ are contractible, the fibration is trivial and each fiber $\pi^{-1}(v_0)$ is contractible. 
\end{proof}

The contractibility problem for general star-shaped polygons remains unknown. Based on the evidence of Bing-Starbird \cite{bing1978linear}, we conjecture that 
\begin{conjecture}
If $P$ is a star-shaped polygon with a triangulation $T$ such that $T$ has no dividing edge, then $\mathcal{GT}(P, T)$ is contractible. 
\end{conjecture}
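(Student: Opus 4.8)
The plan is to try to extend the fibration argument that settled the non-convex quadrilateral. Let $\tilde{P}$ denote the convex hull of the star-shaped polygon $P$, and let $r_1, \dots, r_k$ be the reflex vertices of $P$; these become interior vertices once we pass to $\tilde{P}$. As in the quadrilateral case, I would fill each notch $\tilde{P}\setminus P$ cut out at a reflex vertex, producing a triangulation $\tilde{T}$ of the convex polygon $\tilde{P}$. Since $\tilde{P}$ is convex, the Bloch--Connelly--Henderson theorem gives that $\mathcal{GT}(\tilde{P}, \tilde{T})$ is homeomorphic to a Euclidean space, hence contractible, and this serves as the contractible total space. I would then record the positions of the reflex vertices,
$$\pi : \mathcal{GT}(\tilde{P}, \tilde{T}) \longrightarrow C, \qquad \pi(\tau) = (\tau(r_1), \dots, \tau(r_k)),$$
where $C \subset (\operatorname{int}\tilde{P})^{k}$ is the space of admissible reflex configurations, namely those for which deleting the notch triangles returns a simple star-shaped polygon of the prescribed combinatorial type. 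Exactly as before, adding and deleting the notch faces identifies the fiber $\pi^{-1}(c)$ with $\mathcal{GT}(P_c, T)$, so that $\mathcal{GT}(P,T)$ is the fiber over the configuration realized by $P$ itself.

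With this setup the conclusion is formal. If $\pi$ is a fibration and the base $C$ is contractible, then the long exact sequence of the fibration, combined with the contractibility of $\mathcal{GT}(\tilde{P}, \tilde{T})$ and of $C$, forces every homotopy group of the fiber to vanish. Since $\mathcal{GT}(P_c, T)$ is an open semialgebraic subset of $\Omega^{|V_I|}$ it is triangulable, hence has the homotopy type of a CW complex, and weak contractibility upgrades to genuine contractibility, which is the assertion of the conjecture. I expect the contractibility of $C$ to be the easier of the two points: $C$ is cut out by the open conditions that each reflex vertex lie strictly inside $\tilde{P}$ and that the induced boundary remain simple, and one can try to show that $C$ is star-shaped with respect to the configuration obtained by pushing all reflex vertices toward a common eye of $\Omega$, which reduces it to the convexity-type reasoning already available when $k=1$.

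The hard part will be proving that $\pi$ is a fibration. When $k=1$ this rested on a very special fact: the stabilizer of the three hull vertices in $\mathrm{PGL}_3$ acts simply transitively on the interior of the triangle, so moving the single reflex vertex from $c$ to $c'$ is realized by a unique projective map, which carries line segments to line segments and therefore transports a geodesic triangulation of $P_c$ to one of $P_{c'}$, yielding an honest global trivialization. This mechanism collapses for $k \ge 2$: the stabilizer is only two-dimensional, far too small to reposition several reflex vertices independently (its orbit in $C$ is at most two-dimensional while $C$ has dimension $2k$), so no projective map realizes a generic motion. To repair this I would abandon the global section and build local trivializations by hand, constructing a continuous family of ambient isotopies of $\tilde{P}$ — compactly supported piecewise-linear homeomorphisms localized near each notch — that move one reflex vertex while fixing the others and carry the triangulation along. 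The weighted-energy construction of Section 3, which produces a canonical embedding depending continuously on the boundary data, is a natural candidate for manufacturing such sections while preserving embeddedness.

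The delicate point, and the reason the conjecture remains open, is that these local moves must keep every edge straight and every triangle nondegenerate \emph{simultaneously across the entire triangulation}, not merely near the notch being moved; controlling this globally is precisely where the no-dividing-edge hypothesis — which guarantees that the edge filling each notch is not already present and underlies the Bing--Starbird non-emptiness result — should enter. An appealing alternative is to induct on $k$, placing $P$ in a tower $P = P_0 \subset P_1 \subset \dots \subset P_k = \tilde{P}$ in which each inclusion fills a single notch and each step is governed by a quadrilateral-type fibration with contractible one-dimensional-family base; this localizes the difficulty but does not remove it, since even a single-vertex move must now fix the other reflex vertices, and the obstacle migrates to showing that every intermediate $P_i$ can be kept star-shaped with no dividing edge while still admitting the required trivializing isotopy.
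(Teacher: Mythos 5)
The statement you were asked to prove is not a theorem in the paper at all: it is stated as a conjecture, and the paper explicitly says that the contractibility problem for general star-shaped polygons remains unknown. So there is no "paper proof" to compare against, and your write-up, to its credit, is honest about its own status --- it is a research plan, not a proof, and you yourself identify the point at which it stalls ("the reason the conjecture remains open\dots"). Judged as a proof attempt, it therefore has a genuine gap, and it is exactly the one you name: the fibration (or local triviality) property of $\pi$ is never established for $k \ge 2$ reflex vertices. Your diagnosis of why the paper's quadrilateral argument breaks is correct --- the stabilizer of the hull vertices in the projective group is too low-dimensional to move several reflex vertices independently, so the global section coming from a simply transitive projective action simply does not exist --- but the proposed repairs are named rather than executed, and each hides the same difficulty. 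An ambient PL isotopy of $\tilde{P}$ moving one reflex vertex does not act on $\mathcal{GT}(\tilde{P},\tilde{T})$ at all, since it bends straight edges; re-straightening afterwards while preserving embeddedness of every triangle, globally, \emph{is} the open problem. Likewise, the weighted-energy construction of Section 3 produces one canonical point in each fiber, not a trivializing family of sections through an arbitrary prescribed element of a fiber, which is what local triviality requires.

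Two further points in your setup would need repair even before that. First, a pocket of the convex hull of a star-shaped polygon may contain several reflex vertices, so your picture of one notch per reflex vertex, filled by quadrilateral-type moves, is already a special case; filling a general pocket requires a choice of triangulation of the pocket and a check that the fiber of $\pi$ over an admissible configuration really is $\mathcal{GT}(P_c,T)$ (the deleted triangles must tile the pocket exactly for every element of the fiber, not just for one). Second, contractibility of the base $C$ of admissible configurations is asserted as "the easier point" but not proven; for several reflex vertices the simplicity and star-shapedness constraints couple the coordinates of different reflex vertices, and it is not clear $C$ is even connected, let alone star-shaped. None of this means the strategy is wrong --- it is a natural extension of the paper's Section 4 argument, and the obstructions you isolate are plausibly the right ones to attack --- but as it stands the proposal proves nothing beyond the $k=1$ case already in the paper.
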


\bibliography{ref} 

\providecommand{\bysame}{\leavevmode\hbox to3em{\hrulefill}\thinspace}
\providecommand{\MR}{\relax\ifhmode\unskip\space\fi MR }
\providecommand{\MRhref}[2]{%
  \href{http://www.ams.org/mathscinet-getitem?mr=#1}{#2}
}
\providecommand{\href}[2]{#2}
\begin{thebibliography}{10}

\bibitem{bing1978linear}
RH~Bing and Michael Starbird, \emph{Linear isotopies in ??$^2$}, Transactions
  of the American Mathematical Society \textbf{237} (1978), 205--222.

\bibitem{cairns1944deformations}
Steward~S Cairns, \emph{Deformations of plane rectilinear complexes}, The
  American Mathematical Monthly \textbf{51} (1944), no.~5, 247--252.

\bibitem{cairns1944isotopic}
Stewart~S Cairns, \emph{Isotopic deformations of geodesic complexes on the
  2-sphere and on the plane}, Annals of Mathematics (1944), 207--217.

\bibitem{1910.00240}
Jean Cerf, \emph{About the bloch-connelly-henderson theorem on the simplexwise
  linear homeomorphisms of a convex 2-disk}, arXiv preprint math/1910.00240
  (2019).

\bibitem{connelly1983problems}
Robert Connelly, David~W Henderson, Chung~Wu Ho, and Michael Starbird, \emph{On
  the problems related to linear homeomorphisms, embeddings, and isotopies},
  Continua, decompositions, manifolds, 1983, pp.~229--239.

\bibitem{de1991comment}
Y~Colin de~Verdiere, \emph{Comment rendre g{\'e}od{\'e}sique une triangulation
  d?une surface}, L?Enseignement Math{\'e}matique \textbf{37} (1991), 201--212.

\bibitem{eades2018weighted}
Peter Eades, Patrick Healy, and Nikola~S Nikolov, \emph{The weighted barycenter
  drawing recognition problem}, International Symposium on Graph Drawing and
  Network Visualization, Springer, 2018, pp.~569--575.

\bibitem{earle1969fibre}
Clifford~J Earle, James Eells, et~al., \emph{A fibre bundle description of
  teichm{\"u}ller theory}, Journal of Differential Geometry \textbf{3} (1969),
  no.~1-2, 19--43.

\bibitem{erickson2021planar}
Jeff Erickson and Patrick Lin, \emph{Planar and toroidal morphs made easier},
  International Symposium on Graph Drawing and Network Visualization, Springer,
  2021, pp.~123--137.

\bibitem{floater2003one}
Michael Floater, \emph{One-to-one piecewise linear mappings over
  triangulations}, Mathematics of Computation \textbf{72} (2003), no.~242,
  685--696.

\bibitem{golub2013matrix}
Gene~H Golub and CFV Loan, \emph{Matrix computations, forth edition}, 2013.

\bibitem{gortler2006discrete}
Steven Gortler, Craig Gotsman, and Dylan Thurston, \emph{Discrete one-forms on
  meshes and applications to 3d mesh parameterization}, Computer Aided
  Geometric Design (2006).

\bibitem{gu2019convergence}
David Gu, Feng Luo, and Tianqi Wu, \emph{Convergence of discrete conformal
  geometry and computation of uniformization maps}, Asian Journal of
  Mathematics \textbf{23} (2019), no.~1, 21--34.

\bibitem{gu2018discrete2}
Xianfeng Gu, Ren Guo, Feng Luo, Jian Sun, and Tianqi Wu, \emph{A discrete
  uniformization theorem for polyhedral surfaces ii}, Journal of differential
  geometry \textbf{109} (2018), no.~3, 431--466.

\bibitem{gu2018discrete}
Xianfeng~David Gu, Feng Luo, Jian Sun, and Tianqi Wu, \emph{A discrete
  uniformization theorem for polyhedral surfaces}, Journal of Differential
  Geometry \textbf{109} (2018), no.~2, 223--256.

\bibitem{hass2012simplicial}
Joel Hass and Peter Scott, \emph{Simplicial energy and simplicial harmonic
  maps}, arXiv preprint arXiv:1206.2574 (2012).

\bibitem{ho1973certain}
Chung~Wu Ho, \emph{On certain homotopy properties of some spaces of linear and
  piecewise linear homeomorphisms. i}, Transactions of the American
  Mathematical Society \textbf{181} (1973), 213--233.

\bibitem{hong2008convex}
Seok-Hee Hong and Hiroshi Nagamochi, \emph{Convex drawings of graphs with
  non-convex boundary constraints}, Discrete Applied Mathematics \textbf{156}
  (2008), no.~12, 2368--2380.

\bibitem{luo2020discrete}
Feng Luo, Jian Sun, and Tianqi Wu, \emph{Discrete conformal geometry of
  polyhedral surfaces and its convergence}, arXiv preprint arXiv:2009.12706
  (2020).

\bibitem{luo2022spaces}
Yanwen Luo, \emph{Spaces of geodesic triangulations of surfaces}, Discrete \&
  Computational Geometry (2022), 1--19.

\bibitem{luo2021deformation}
Yanwen Luo, Tianqi Wu, and Xiaoping Zhu, \emph{The deformation space of
  geodesic triangulations and generalized tutte's embedding theorem}, arXiv
  preprint arXiv:2105.00612 (2021).

\bibitem{luo2021deformation2}
\bysame, \emph{The deformation spaces of geodesic triangulations of flat tori},
  arXiv preprint arXiv:2107.05159 (2021).

\bibitem{luo2022deformation3}
\bysame, \emph{The deformation space of delaunay triangulations of the sphere},
  arXiv preprint arXiv:2202.06402 (2022).

\bibitem{petersen2008matrix}
Kaare~Brandt Petersen, Michael~Syskind Pedersen, et~al., \emph{The matrix
  cookbook}, Technical University of Denmark \textbf{7} (2008), no.~15, 510.

\bibitem{smale1959diffeomorphisms}
Stephen Smale, \emph{Diffeomorphisms of the 2-sphere}, Proceedings of the
  American Mathematical Society \textbf{10} (1959), no.~4, 621--626.

\bibitem{sun2015discrete}
Jian Sun, Tianqi Wu, Xianfeng Gu, and Feng Luo, \emph{Discrete conformal
  deformation: algorithm and experiments}, SIAM Journal on Imaging Sciences
  \textbf{8} (2015), no.~3, 1421--1456.

\bibitem{tutte1963draw}
William~Thomas Tutte, \emph{How to draw a graph}, Proceedings of the London
  Mathematical Society \textbf{3} (1963), no.~1, 743--767.

\bibitem{wu2014finiteness}
Tianqi Wu, \emph{Finiteness of switches in discrete yamabe flow}, Ph.D. thesis,
  Master Thesis, Tsinghua University, Beijing, 2014.

\bibitem{wu2015rigidity}
Tianqi Wu, Xianfeng Gu, and Jian Sun, \emph{Rigidity of infinite hexagonal
  triangulation of the plane}, Transactions of the American Mathematical
  Society \textbf{367} (2015), no.~9, 6539--6555.

\bibitem{wu2020convergence}
Tianqi Wu and Xiaoping Zhu, \emph{The convergence of discrete uniformizations
  for closed surfaces}, arXiv preprint arXiv:2008.06744 (2020).

\bibitem{xu2011embedding}
Yin Xu, Renjie Chen, Craig Gotsman, and Ligang Liu, \emph{Embedding a
  triangular graph within a given boundary}, Computer Aided Geometric Design
  \textbf{28} (2011), no.~6, 349--356.

\end{thebibliography}
\bibliographystyle{amsplain}

\end{document}